\title             {A tight example for approximation ratio 5 for covering small cuts by the primal-dual method}
\titlerunning{A tight example for 5-approximation for covering small cuts by primal-dual}
\author{Zeev Nutov}{The Open University of Israel}{nutov@openu.ac.il}
{https://orcid.org/0000-0002-6629-3243}{}
\authorrunning{Zeev Nutov}
\begin{document}

\maketitle



\def\sem    {\setminus}
\def\subs   {\subseteq}
\def\empt   {\emptyset}

\def\f   {\frac}

\def\de   {\delta}

\def\AA  {{\cal A}}
\def\CC  {{\cal C}}
\def\FF   {{\cal F}}


\keywords{primal dual approximation algorithm, small cuts cover problem, tight example}

\begin{abstract}
In the {\sc Small Cuts Cover} problem 
we seek to cover by a min-cost edge-set the set family 
of cuts of size/capacity $<k$ of a graph. 
Recently, Simmons \cite{S} showed that the primal-dual algorithm 
of Williamson, Goemans, Mihail, and Vazirani \cite{WGMV}
achieves approximation ratio $5$ for this problem, and asked whether this bound is tight.
We will answer this question positively, by providing an example in which the ratio between 
the solution produced by the primal-dual algorithm and the optimum is arbitrarily close to $5$. 
\end{abstract}

\section{Introduction} \label{s:intro}

Let $G=(V,E)$ be a graph. 
We call any proper node subset $\empt \ne S \subset V$ of $V$ a {\bf cut}.
Let $\de(S)$ denote the set 
of edges of $G$ with exactly one end in $S$ and let $d(S)=|\de(S)|$ be their number;
for singletons we write $\de(a)$ and $d(a)$ instead of $\de(\{a\})$ and $d(\{a\})$, respectively.
An edge set $J$ {\bf covers} a cut $S$ if there is an edge in $J$ with exactly one end in $S$.
Given an integer $k$ we say that $S$ is a {\bf small cut} of $G$ if $d(S)<k$.
We consider the following problem.

\begin{center}
\fbox{\begin{minipage}{0.98\textwidth} \noindent
\underline{\sc Small Cuts Cover} \\ 
{\em Input:} \ \ A graph $G=(V,E)$, 
a set $L$ of links (edges) on $V$ with costs $\{c_e:e \in L\}$, and an \\
\hspace*{1.2cm} integer $k \ge 1$. \\ 
{\em Output:} A min-cost link set $J \subs L$ that covers the set-family $\FF=\{\empt \ne S \subset V:d(S) <k\}$ \\
\hspace*{1.2cm} of small cuts of $G$.
\end{minipage}} \end{center}

Following the breakthrough achieved in \cite{BCGI}, 
where it was shown that a constant approximation ratio can be achieved
using the primal-dual algorithm of 
Williamson, Goemans, Mihail, and Vazirani \cite{WGMV},
the performance of the primal-dual algorithm for this problem 
was studied in several other papers, c.f. \cite{N-prel, B, N-MFCS,SBC}.
Recently, Simmons \cite{S} showed that the primal-dual algorithm 
achieves approximation ratio $5$ for this problem, and asked whether this bound is tight.
We will answer this question positively, by providing an example in which the ratio between 
the solution produced by the primal-dual algorithm and ${\sf opt}$ (the optimal solution cost)
is arbitrarily close to $5$. 

\begin{theorem} \label{t:main}
For any integers $q,p,k$ such that 
$q \ge 1$, $p \ge 2$, and $k \ge 2pq+1$
there exists an instance of the {\sc Small Cuts Cover} problem
on $4p+3$ nodes such that the WGMV  \cite{WGMV} primal-dual algorithm 
computes a solution of cost $\f{5p}{p+2} \cdot {\sf opt}$.
\end{theorem}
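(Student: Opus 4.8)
The plan is to construct an explicit graph $G$ together with a link set $L$ on $4p+3$ nodes on which the primal-dual execution can be traced by hand, and to exhibit a cheap feasible solution witnessing that ${\sf opt}$ is small. The natural architecture is a ``gadget'' graph in which the small-cut family $\FF$ is rich enough to force the algorithm to buy $5$ units of link-cost in a structured way, while a single clever link of cost $p+2$ (up to scaling) already covers everything. Concretely, I would take a central portion of $3$ special nodes (the ``$+3$'' in $4p+3$) and $p$ identical blocks of $4$ nodes each; each block contributes a cluster of small cuts. The graph edges of $G$ are chosen so that $d(S)<k$ exactly for the cuts we want to be active, and the condition $k\ge 2pq+1$ is precisely what guarantees that all the relevant cuts in the $p$ blocks, together with their unions, stay below the threshold $k$ (the parameter $q$ scales edge multiplicities so that the construction survives for arbitrarily large $k$).

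\textbf{Step 1: fix the instance.} Describe $V$, $E$, $L$, and $c$ explicitly; verify $|V|=4p+3$. Identify $\FF=\{S: d(S)<k\}$ and check, using $k\ge 2pq+1$, that the list of cuts we intend the algorithm to interact with is contained in $\FF$, and that cuts we do not want are above the threshold.

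\textbf{Step 2: analyze {\sf opt}.} Exhibit a feasible link set $J^*\subs L$ covering every $S\in\FF$ with $c(J^*)=p+2$ (after normalizing unit costs), and argue no cheaper cover exists, so ${\sf opt}=p+2$. This is usually the easy direction: one or two well-placed links hit all cuts simultaneously.

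\textbf{Step 3: trace the WGMV primal-dual run.} This is the heart of the argument and the main obstacle. I would describe the laminar family of ``violated'' minimal sets maintained by the algorithm, show that in the dual-growth phase the moats around the $p$ blocks and the $3$ central nodes are raised in a symmetric way, and compute which links get added and in what order; then argue that the reverse-delete (pruning) step removes nothing essential, so the final solution has cost exactly $5p$ (again after normalization). The delicate points are (i) checking that at each stage the minimal violated sets are exactly the ones claimed — this requires re-verifying $d(\cdot)<k$ for the evolving residual cuts, where the slack $k-2pq$ is again used; and (ii) verifying that reverse-delete is powerless, i.e.\ every link in the output is needed for some cut that becomes uncovered if it is deleted.

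\textbf{Step 4: conclude.} Combining Steps 2 and 3, the algorithm returns a solution of cost $5p=\frac{5p}{p+2}\cdot(p+2)=\frac{5p}{p+2}\cdot{\sf opt}$, and letting $p\to\infty$ drives the ratio to $5$, matching Simmons's upper bound \cite{S}. I would end by remarking that the freedom in $q$ shows the example is not an artifact of small $k$.

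The step I expect to fight with is Step 3: primal-dual executions are sensitive to tie-breaking, so I would need to design $G$ so that the symmetric moat-growth is forced (e.g.\ by making all relevant moats grow at the same rate and all candidate links have matching reduced costs at the critical moments), and then carefully bookkeep the laminar family to be sure no unexpected small cut intervenes — which is exactly where the hypothesis $k\ge 2pq+1$ has to be spent.
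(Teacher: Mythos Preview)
Your outline matches the paper's approach essentially exactly: $p$ identical four-node blocks hung on a three-node ``axis'' $r,b,z$, a cheap ``blue'' cover of cost $p+2$ versus the algorithm's ``red'' output of cost $5p$, with $k\ge 2pq+1$ used to certify which cuts are small. Two small corrections to your mental model: the optimum is not one or two links but $p{+}1$ links (one per block plus one axis link $rz$), and the $\FF$-cores in the first iteration are not ``the $p$ blocks and the $3$ central nodes'' but rather $\{r\},\{t_1\},\dots,\{t_p\}$ together with a single large core $C$ that straddles the axis and all blocks --- getting this right is exactly what makes Step~3 go through.
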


\section{The construction} \label{s:main}

We start by describing the WGMV algorithm for covering an arbitrary set family $\FF$. 
An inclusion-minimal set in $\FF$ is called an {$\FF$-\bf core}, 
or just a {\bf core}, if $\FF$ is clear from the context.  
Consider the standard LP-relaxation {\bf (P)} for the problem of covering the family $\FF$ 
and its dual program {\bf (D)}:
\[ \displaystyle
\begin{array} {lllllll} 
& \hphantom{\bf (P)} & \min              & \ \displaystyle \sum_{e \in E} c_e x_e                                                                 & 
    \hphantom{\bf (P)} & \max            & \ \displaystyle \sum_{S \in \FF} y_S  \\
& \mbox{\bf (P)}         & \ \mbox{s.t.} & \displaystyle \sum_{e \in \delta(S)} x_e  \geq 1   \ \ \ \forall S \in \FF \ \ \ \ \ \ \  &
    \mbox{\bf (D)}        & \ \mbox{s.t.} & \displaystyle \sum_{\delta(S) \ni e} y_S \leq c_e \ \ \ \forall e \in E \\
& \hphantom{\bf (P)} &                      & \ \ x_e \geq 0 \ \ \ \ \ \ \ \ \ \forall e \in E                                                                  &
    \hphantom{\bf (P)} &                      & \ \ y_S \geq 0 \ \ \ \ \ \ \ \ \ \ \forall S \in \FF 
\end{array}
\]

Given a solution $y$ to {\bf (D)}, an edge $e \in E$ is {\bf tight} if 
the inequality of $e$ in {\bf (D)} holds with equality.
The algorithm has two phases.

{\bf Phase~1} starts with $J=\emptyset$ and applies a sequence of iterations.
At the beginning of an iteration, we compute the cores of the family of sets in $\FF$ not yet covered by $J$.
Then we raise the dual variables corresponding to these cores
uniformly until some edge $e \in E \sem J$ becomes tight, and add all tight edges to $J$.
Phase~1 terminates when $J$ covers $\FF$.

{\bf Phase~2} is a ``reverse delete'' phase, in which we process edges in the reverse order
that they were added, and delete an edge $e$ from $J$ if $J \sem \{e\}$ still covers $\FF$.
At the end of the algorithm, $J$ is output.

\medskip

Consider the {\sc Small Cuts Cover} instance $G,c,L$ in Fig.~\ref{f:g}(a), 
where $q \ge 1$ and 
\begin{equation} \label{e:p1}
k \ge 2q +1
\end{equation}
The set of links is a disjoint union of two sets: 
\[
\mbox{
the set of {\bf red links}   $L_{\sf red}=\{tx,ay,yr\}$ and 
the set of {\bf blue links} $L_{\sf blue}=\{tb,rz\}$.}
\]
The cost of a link is the number of non-white nodes it connects, so 
$c(ya)=c(tb)=1$ and the weight of any other link is $2$. 

\begin{figure} \centering \includegraphics[scale=0.7]{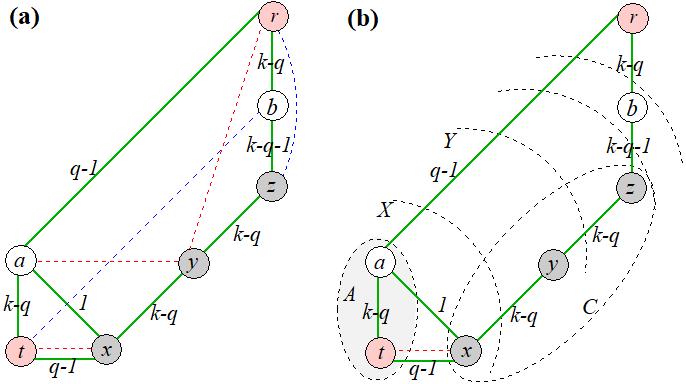}
\caption{
\begin{enumerate}[(a)]
\item
The instance of {\sc Small cuts Cover} $G,c,L$. 
The green edges are those of the input graph, and the number on a green edge 
is the multiplicity/capacity of the edge. The red and blue edges are links.
The weight of a link is the number of non-white nodes it connects. 
\item
The $\FF$-cores are $\{t\}$, $\{r\}$ and $C$, and 
the cuts $A,X,Y$ in $\FF^r \sem \FF^r_C$.
The figure also shows two additional cuts in $\FF^r_C$,
but these cuts are not addressed in Lemma~\ref{l:cores}. 
\end{enumerate}
}
\label{f:g} \end{figure}

Let $\FF$ be the family of small cuts of $G$ and let $\CC$ be the family of $\FF$-cores.
Note that the singleton sets $\{t\}$ and $\{r\}$ are small cuts of $G$ 
(since $d(t)=d(r)=k-1$)
and that the following sets are also small cuts of $G$, see Fig.~\ref{f:g}(b):
\begin{eqnarray*}
A =\{t,a\}          & & d(A)=2q-1\le k-2 \\
X=A \cup \{x\}  & & d(X)=k-1  \\ 
Y=X \cup \{y\}  & & d(Y)=k-1     \\
C= \{x,y,z\}        & & d(C)= k-1
\end{eqnarray*}
Since $\FF$ is symmetric ($S \in \FF$ implies $V \sem S \in \FF$),
it is sufficient to consider small cuts of $G$ not containing $r$. 
Let us define two families of small cuts:
\begin{itemize}
\item
$\FF^r=\{S \in \FF: r \notin S\}$ is the family of small cuts of $G$ that do not contain $r$.
\item
$\FF^r_C=\{S \in \FF: r \notin S, C \subs S\}$ is the 
family of small cuts of $G$ that contain $C$ but not $r$. 
\end{itemize}

\begin{lemma} \label{l:cores}
The family of $\FF$-cores is $\CC=\{\{r\},\{t\},C\}$ and $\FF^r \sem \FF^r_C = \{\{t\},A,X,Y\}$.
\end{lemma}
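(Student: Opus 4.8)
\medskip
\noindent\emph{Proof plan.}
The plan is to reduce both assertions of the lemma to the single identity
\[
\FF^r \;=\; \{\{t\},A,X,Y\}\;\cup\;\FF^r_C ,
\]
whose union is disjoint because $z$ lies in none of $\{t\},A,X,Y$, so that none of these four sets contains $C$. Granting this identity together with the inclusion-minimality of $C$ in $\FF$, the lemma follows at once: deleting $\FF^r_C$ from both sides gives $\FF^r\sem\FF^r_C=\{\{t\},A,X,Y\}$; and every $S\in\FF$ with $r\in S$ contains $\{r\}$, while every $S\in\FF$ with $r\notin S$ lies in $\FF^r$ and hence, by the identity, contains $\{t\}$ (if $S\in\{\{t\},A,X,Y\}$) or $C$ (if $S\in\FF^r_C$); so every member of $\FF$ contains one of $\{r\},\{t\},C$, and since these three sets lie in $\FF$ and are inclusion-minimal there, they are precisely the $\FF$-cores.

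The easy half is ``$\supseteq$'' in the identity and the minimality of $C$. The inclusion ``$\supseteq$'' is read off from the cut sizes listed above (all $<k$ by \eqref{e:p1}) together with $\FF^r_C\subs\FF^r$. For the minimality of $C$ I would compute, from the green-edge multiplicities of Fig.~\ref{f:g}(a), the value $d(P)$ for each of the six proper nonempty subsets $P$ of $C=\{x,y,z\}$ and check $d(P)\ge k$ -- this holds because each of $x,y,z$ is incident to heavy green edges leaving $C$. In the same breath I would record $d(a)\ge k$, the value of $d(\{a,x\})$, and the structural fact, visible in Fig.~\ref{f:g}(a), that no green edge joins two non-consecutive vertices of the path $t,a,x,y,z,r$; these facts are used below.

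The substance is the reverse inclusion ``$\subseteq$'': given $S\in\FF^r$ (so $r\notin S$ and $d(S)<k$), I must show $S\in\{\{t\},A,X,Y\}$ or $C\subs S$. The workhorse is the trivial inequality $d(S)\ge d(P)$, valid whenever $\delta(P)\subs\delta(S)$; for a proper nonempty $P\subsetneq C$ this hypothesis holds as soon as every $G$-neighbour of $P$ outside $P$ avoids $S$, and it then yields the contradiction $d(S)\ge d(P)\ge k$. First, a direct count from Fig.~\ref{f:g}(a) shows that $b\in S$ forces $C\subs S$ or $d(S)\ge k$, so I may assume $b\notin S$, i.e.\ $S\subs\{t,a,x,y,z\}$. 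If $t\in S$, the structural fact above lets me peel off the chain $\{t\}\subsetneq A\subsetneq X\subsetneq Y$: applying the workhorse with $P=S\cap C$, $S\cap\{y,z\}$, $S\cap\{z\}$ in the first three cases and inspecting the fourth directly, one gets in turn that $a\notin S$ forces $S=\{t\}$ or $C\subs S$; that $\{t,a\}\subs S$ and $x\notin S$ force $S=A$ or $d(S)\ge k$; that $\{t,a,x\}\subs S$ and $y\notin S$ force $S=X$ or $d(S)\ge k$; and that $Y\subs S$ forces $S=Y$ or $C\subs S$. If $t\notin S$, then $S\subs\{a,x,y,z\}$ and I must show $C\not\subs S$ is impossible: if $a\notin S$ then $S$ is a proper nonempty subset of $C$, so $d(S)\ge k$; if $a\in S$ and $S\cap C=\empt$ then $S=\{a\}$ and $d(a)\ge k$; if $a\in S$, $x\notin S$ and $S\cap C\ne\empt$ then the workhorse applies with $P=S\cap C\subs\{y,z\}$; and if $a,x\in S$ then submodularity of $d$ against the tight cut $X$ gives $d(S)\ge d(\{a,x\})+d(S\cup\{t\})-d(X)=(k-1)+d(S\cup\{t\})\ge k$. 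I expect this final case $t\notin S$ to be the main obstacle, precisely because there the part of $C$ lying in $S$ may be glued to $a$, so the workhorse does not apply directly and one is forced into the submodularity computation. Once it is handled the identity is established, and by the reduction above so is the lemma.
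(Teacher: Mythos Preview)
Your reduction of both assertions to the single identity $\FF^r=\{\{t\},A,X,Y\}\cup\FF^r_C$ is a clean reorganisation, and most of your case analysis goes through. There is, however, a genuine gap.

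The ``structural fact'' you rely on --- that no green edge joins two non-consecutive vertices of the path $t,a,x,y,z,r$ --- is \emph{false} for $q\ge 2$. From $d(t)=k-1$, $d(a)=k$, $d(A)=2q-1$ one gets $\text{edges}(t,a)=k-q$, so $t$ has $q-1$ further edges and $a$ has $q$ further edges leaving $A$; combining with $d(X)=k-1$, $d(r)=k-1$ and the paper's explicit ``$q-1$ $ar$ edges'' forces $\text{edges}(t,x)=q-1$, $\text{edges}(a,x)=1$, $\text{edges}(a,r)=q-1$. In particular $t$ and $x$ \emph{are} joined by $q-1$ green edges, and so are $a$ and $r$.

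This breaks your workhorse precisely in the first peel-off step: when $t\in S$, $a\notin S$, $b\notin S$ and $x\in S$, you take $P=S\cap C\ni x$, but then the $q-1$ $tx$-edges lie in $\delta(P)$ (since $t\notin P$) yet not in $\delta(S)$ (since $t,x\in S$), so $\delta(P)\not\subseteq\delta(S)$ and the inequality $d(S)\ge d(P)$ is unavailable. Concretely, for $S=\{t,x\}$ your bound gives only $d(S)\ge d(x)-(q-1)=k-q+1$, which is below $k$ once $q\ge 2$. The statement you want is still true here (e.g.\ $\delta(S)$ contains the $k-q$ $ta$-edges and, since $C\not\subseteq S$, at least $k-q$ further edges from $S\cap C$ into $\{a,b\}\cup(C\setminus S)$, giving $d(S)\ge 2(k-q)\ge k+1$), but your plan as written does not establish it.

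The remaining subcases ($\{t,a\}\subseteq S$ with $x\notin S$; $X\subseteq S$ with $y\notin S$; $Y\subseteq S$; and the whole $t\notin S$ branch including the submodularity step) are unaffected, because the $P$'s you choose there sit inside $\{y,z\}$ or $\{z\}$, whose green neighbours are only $x,b$ and $y,b$. The paper, for comparison, avoids this subcase altogether: it first proves $\CC=\{\{r\},\{t\},C\}$ and then, for the second part, uses that any $S\in\FF^r\setminus\FF^r_C$ must contain a core other than $C$, hence $t\in S$, and argues directly that $a\in S$, so the configuration $t\in S,\ a\notin S,\ x\in S$ never needs separate treatment.
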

\begin{proof}
We prove the first part. As was already observed, $\{r\},\{t\}$ and $C$ are small cuts.
Thus the singleton sets $\{r\}$ and $\{t\}$ are $\FF$-cores and $C \in \FF$.
To show that $C$ is an inclusion-minimal small cut we will just verify 
that $d(S) \ge k$ for any $\empt \ne S \subset C$:
\[ 
d(x)=k \ \ \ \ \  d(y)=2k-2q \ge k+1 \ \ \ \ \  d(z)=2k-2q-1 \ge k 
\]
\[
d(\{x,y\})=k \ \ \ \ \ d(\{y,z\})=2k-2q-1 \ge k \ \ \ \ \ d(\{x,z\})=d(x)+d(z)=k+d(z)
\]
The inequalities are by (\ref{e:p1}). 
Since the $\FF$-cores are pairwise disjoint, the only additional candidates to be $\FF$-cores 
are the sets $\{a\}$, $\{b\}$ and $\{a,b\}$.  
But 
\[
d(a)=k \ \ \ \ \  d(b)= 2k-2q-1 \ge k \ \ \ \ \ d(\{a,b\})=d(a)+d(b)=k+d(b) 
\]
concluding the proof of the first part that $\CC=\{\{r\},\{t\},C\}$.

We prove the second part.
We already know that $\{\{t\},A,X,Y\} \subseteq \FF^r \sem \FF^r_C$. 
Suppose to the contrary that $\FF^r \sem \FF^r_C$ has some other cut $S$. 
Since $S$ contains a core distinct from $C$, $t \in S$. 
Since $S \ne \{t\}$, we must have $a \in S$, as otherwise 
$\de(S)$ contains at least $k-1$ edges incident to $a$ and some other edge,
implying the contradiction $d(S) \ge k$.
Thus $S$ properly contains $A$ and $\de(S)$ contains the $q-1$ $ar$ edges. 
If $b \in S$ then $\de(S)$ contains additional $k-q$ $br$ edges,
and thus $S$ must contain $z,y,x$, as otherwise 
$d(S) \ge (q-1)+(k-q)+(k-q)>k$.
This gives the contradiction $C \subs S$, so $b \notin S$.
Similarly, if $z \in S$ then $\de(S)$ contains $k-q-1$ $zb$ edges,  
and thus $S$ must contain $y,x$, as otherwise $d(S) \ge (q-1)+(k-q-1)+(k-q)> k$.
This again gives the contradiction $C \subs S$, so $z \notin S$.
Since $S \ne X$ and $S \ne Y$, the only case is $S=A \cup \{y\}$, 
but this is not possible since then $d(S)=d(A \cup \{y\}) >d(y)=2k-2q>k$.
\end{proof}

Having this partial description of small cuts of $G$, the following is easy to verify.

\begin{lemma} \label{l:feasible}
Each one of the link sets 
$L_{\sf red}=\{tx,ay,yr\}$ and
$L_{\sf blue}=\{tb,rz\}$ 
is an inclusion minimal-feasible solution. 
\end{lemma}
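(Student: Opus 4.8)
The plan is to establish feasibility and inclusion-minimality separately, in both cases reducing to a short finite verification by invoking Lemma~\ref{l:cores}. Since the family $\FF$ is symmetric, a link set covers $\FF$ if and only if it covers $\FF^r$, and Lemma~\ref{l:cores} gives $\FF^r = \{\{t\},A,X,Y\} \cup \FF^r_C$. Hence checking that $L_{\sf red}$ (resp.\ $L_{\sf blue}$) is feasible amounts to checking that it covers each of the four explicit cuts $\{t\}$, $A$, $X$, $Y$, and that it covers every cut in $\FF^r_C$.

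For the four explicit cuts one just inspects the endpoints of the links against the nested node sets $\{t\} \subs A \subs X \subs Y$: the link $tx \in L_{\sf red}$ has $t$ inside and $x$ outside each of these sets, and likewise $tb \in L_{\sf blue}$ has $t$ inside and $b$ outside, so each of $\{t\},A,X,Y$ is covered by both $L_{\sf red}$ and $L_{\sf blue}$. For $\FF^r_C$ the key point is that every such cut $S$ contains $C=\{x,y,z\}$ and omits $r$; therefore the link $yr$ has exactly one end ($y$) in $S$ and the link $rz$ has exactly one end ($z$) in $S$. Thus $yr \in L_{\sf red}$ covers all of $\FF^r_C$ in one stroke, and so does $rz \in L_{\sf blue}$, which proves both link sets are feasible.

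For inclusion-minimality I would exhibit, for each link, a single cut of $\FF^r$ that the link alone covers within its set, so that deleting it destroys feasibility. For $L_{\sf red}=\{tx,ay,yr\}$: the cut $\{t\}$ is covered only by $tx$ (the links $ay,yr$ miss $t$); the cut $X=\{t,a,x\}$ is covered only by $ay$ (the link $tx$ has both ends in $X$ and $yr$ has both ends outside $X$); the cut $Y=\{t,a,x,y\}$ is covered only by $yr$ (both $tx$ and $ay$ have both ends inside $Y$). For $L_{\sf blue}=\{tb,rz\}$: the cut $\{t\}$ is covered only by $tb$, and the cut $C$ (a member of $\FF^r_C$) is covered only by $rz$ (the link $tb$ has neither end in $C$). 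The only part of the argument that is not a purely mechanical finite check is the treatment of the family $\FF^r_C$, and the way around this is precisely the structural information supplied by Lemma~\ref{l:cores}: every cut of $\FF^r$ other than $\{t\},A,X,Y$ contains $C$ and excludes $r$, so a single link joining a node of $C$ to $r$ suffices to cover the whole family; once this is in hand the rest is routine.
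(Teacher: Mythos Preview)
Your approach is the same as the paper's --- reduce to $\FF^r$ by symmetry, invoke Lemma~\ref{l:cores} to split $\FF^r$ into the four explicit cuts $\{t\},A,X,Y$ and the subfamily $\FF^r_C$, then handle $\FF^r_C$ in one stroke via a link with one end in $C$ and the other at $r$, and finally exhibit a witness cut for each link to show minimality.

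However, your feasibility argument for $L_{\sf red}$ contains a concrete error. You assert that the link $tx$ ``has $t$ inside and $x$ outside each of'' $\{t\},A,X,Y$, and conclude from this alone that $L_{\sf red}$ covers all four cuts. But $x\in X=\{t,a,x\}$ and $x\in Y=\{t,a,x,y\}$, so $tx$ has both endpoints in $X$ and in $Y$ and covers neither of them --- indeed you say exactly this yourself two sentences later in the minimality paragraph. As written, then, you have not shown that $L_{\sf red}$ covers $X$ or $Y$. The fix is the one the paper uses (and that your minimality paragraph already contains implicitly): $ay$ covers $X$ (since $a\in X$, $y\notin X$) and $yr$ covers $Y$ (since $y\in Y$, $r\notin Y$). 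Once you add these two observations to the feasibility paragraph, the proof is complete and matches the paper's.
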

\begin{proof}
We verify that $L_{\sf red}$ covers $\FF^r$ and
that for every red link there is a set in $\FF_r$ that is not covered by any other red link.
\begin{itemize}
\item
$tx$ covers $\{t\}$ and $A$, and it is the only red link that covers $\{t\}$.  
\item
$ay$ covers $X$, and it is the only red link that covers $X$.
\item
$yr$ covers $\FF^r_C \cup \{Y\}$ and it is the only red link that covers $Y$. 
\end{itemize}

We verify that $L_{\sf blue}$ covers $\FF^r $ and
that for every red link there is a set in $\FF^r$ that is not covered by any other blue link.
\begin{itemize}
\item
$tb$ covers each of $\{t\},A,X,Y$ 
and it is the only blue link that covers $\{t\}$.  
\item
$rz$ covers $\FF^r_C$ and it is the only blue link that covers $V \sem \{r\}$.
\end{itemize}

Consequently, 
each one of the link sets $L_{\sf red},L_{\sf blue}$ covers $\FF^r$ (by Lemma~\ref{l:cores}),
and no subset of $L_{\sf red}$ or of $L_{\sf blue}$ covers $\FF^r$. 
\end{proof}

If we run the primal-dual algorithm on the Fig.~\ref{f:eg}(a) instance, 
it can return the red solution $L_{\sf red}$ of cots $5$ 
(and the dual solution $y_S=1$ for every $\FF$-core $S$).
The blue solution $L_{\sf blue}$ has cost $3$. 
Hence we only get ratio $c(L_{\sf red})/c(L_{\sf blue})=5/3$, which is not good enough. 
The idea is to glue $p$ instances in Fig.~\ref{f:eg}(a) 
via the ``axis'' $r,u,s$, such that the blue link $rs$ 
will be ``shared'' by all instances. 
Then the red solution cost will be $5p$,
the blue solution cost will be $p+2$, 
and the ratio between the costs 
will be $\f{5p}{p+2}$, as claimed in Theorem~\ref{t:main}.

\medskip

Formally, the construction is as follows. 
Let $q,p,k$ be integer parameters such that $q \ge 1$, $p \ge 2$, and 
\begin{equation} \label{e:q}
k \ge 2pq+1 
\end{equation}
We take $p$ copies $(G_1,c_1,L_1), \ldots, (G_p,c_p,L_p)$ of the Fig.~\ref{f:g}(a) instance and
obtain a new {\sc Small Cuts Cover} instance $G,c,L$ with $4p+3$ nodes by:
\begin{itemize}
\item
Contracting the nodes $r_1, \ldots, r_p$ of the $p$ instances into a single node $r$, 
contracting $b_1,\ldots,b_p$ into a single node $b$, and 
contracting $z_1,\ldots,z_p$ into a single node $z$.
\item
Replacing the $p(k-q)$ $rb$-edges by $k-pq$ $rb$-edges and 
the $p(k-q-1)$ $bz$-edges by $k-pq-1$ $bz$-edges
\end{itemize}
See Fig.~\ref{f:eg}(a,b) for the cases $p=2$ and $p=4$, respectively.
For higher values of $p$, one can imagine the nodes $r,b,z$ 
as being an axis on which we ''hang'' the $p$ instances.
The axis nodes $r,b,z$ are shared by the glued instances, while the nodes
$a,t,x,y$ have copies $a_i,t_i,x_i,y_i$ in each instance $i=1,\ldots, p$. 
We will use the following notation for the constructed instance:
\begin{itemize}
\item
$G=(V,E)$ is the graph of the instance, $L$ is the set of links,
and $c$ is the cost function.
\item
$L_{\sf red}$ and $L_{\sf blue}$ are 
the red and the blue links of the instance, respectively; 
the links in $\{a_1y_1, \ldots,a_py_p\} \cup \{t_1b, \ldots,t_pb\}$ 
have cost $1$ while all the other links have cost $2$.
\item
$A_i,X_i,Y_i$ are the copies of the cuts $A,X,Y$ of instance $i$, respectively. 
\item
$C=\{z\} \cup (\cup_{i=1}^p \{x_i,y_i\})$ is the set of gray nodes of the instance.
\item
$\FF=\{\empt \ne S \subset V:d(S)<k\}$ 
is the family of small cuts of $G$, $\CC$ is the family of $\FF$-cores.
\item
$\FF^r=\{S \in \FF: r \notin S\}$ is the family of small cuts of $G$ that do not contain $r$.
\item
$\FF^r_C=\{S \in \FF: r \notin S, C \subs S\}$ is the 
family of small cuts of $G$ that contain $C$ but not $r$. 
\end{itemize}

\begin{figure} \centering \includegraphics[scale=0.85]{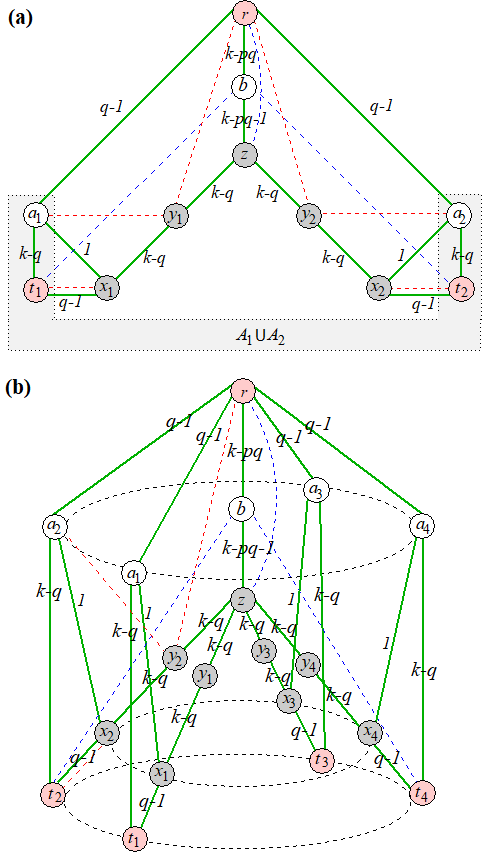}
\caption{
Illustration of the construction. 
The green edges are those of the input graph, and the number on a green edge 
is the multiplicity/capacity of the edge. The red and blue edges are links.
The weight of a link is the number of non-white nodes it connects. 
\begin{enumerate}[(a)]
\item
The case $p=2$; note that the cut $A_1 \cup A_2$ is also a small cut.
\item
The case $p=4$; only some links are shown. 
\end{enumerate}
}
\label{f:eg} \end{figure}

Naturally, $G$ has small cuts inherited from each $G_i$ -- 
these are the cuts $\{t_i\},A_i,X_i,Y_i$ that contain no node from the axis nodes $r,b,z$.
But $G$ has additional small cuts.
For example, any cut that is a union of $1 \le p' \le p$ cuts $A_i=\{t_i,a_i\}$ has 
$p'(2q-1) <k$ edges, and thus by (\ref{e:q}) is small;
see an example of such a cut $A_1 \cup A_2$ in Fig.~\ref{f:eg}(a).
Let $\AA=\left\{\cup_{i \in I} A_i: \empt \ne I \subs \{1, \ldots p\right\}\}$
be the family of such cuts.

\begin{lemma} \label{l:cores'}
The family of $\FF$-cores is
$\CC=\{\{r\},\{t_1\}, \ldots, \{t_p\},C\}$ and
\[
\FF^r \sem \FF^r_C=\{\{t_1\}, \ldots, \{t_p\} \} \cup \AA \cup \{X_1, \ldots,X_p\} \cup \{Y_1, \ldots,Y_p\} \ .
\]
\end{lemma}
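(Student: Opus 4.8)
The plan is to mimic the proof of Lemma~\ref{l:cores}, lifting it to the glued instance, and crucially exploiting the fact that the contracted axis nodes $r,b,z$ together with the re-scaled edge multiplicities make the ``counting'' arguments go through with exactly the same inequalities, now with the slack coming from~(\ref{e:q}) instead of~(\ref{e:p1}). First I would prove $\CC=\{\{r\},\{t_1\},\ldots,\{t_p\},C\}$. The singletons $\{t_i\}$ are small since $d(t_i)=k-1$ (inherited unchanged from copy $i$), and $\{r\}$ is small since $d(r)$ equals the number of $rb$-edges plus the $ra_i$ edges, which by the re-scaling is $(k-pq)+p(q-1)=k-p<k$. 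To see $C$ is a core I would verify $d(S)\ge k$ for every nonempty proper $S\subset C$; here $S$ decomposes according to which gray nodes of which copies it contains, plus possibly $z$, and each ``ingredient'' ($x_i$, $y_i$, $z$, or pairs within a copy) contributes as in Lemma~\ref{l:cores}, with the only subtlety being that the $bz$-edges were re-scaled to $k-pq-1$; the inequalities $d(x_i)=k$, $d(z)\ge k$, etc., all survive because~(\ref{e:q}) is at least as strong as~(\ref{e:p1}). Since cores are pairwise disjoint, the only remaining candidates are subsets of $\{a_1,\ldots,a_p,b\}$, and each $d(a_i)=k$, $d(b)\ge k$, $d(\{a_i,a_j\})=d(a_i)+d(a_j)$ etc., so none is small; this closes the first part.

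For the second part I would again argue by contradiction: suppose $S\in\FF^r\sem\FF^r_C$ is not one of the listed cuts. Since $S\notin\FF^r_C$ it omits some gray node, and since $S$ contains some core other than $C$ (every set in $\FF$ contains a core, and $C\not\subs S$), we get $t_i\in S$ for some $i$. The key local claim, copied from Lemma~\ref{l:cores}, is that within each copy $i$ the presence of $t_i$ forces the set $S\cap\{a_i,t_i,x_i,y_i\}$ to be one of $\{t_i\}$, $A_i$, $X_i$, $Y_i$ (otherwise the edges incident to $a_i$ alone already push $d(S)\ge k$), and moreover that $S$ cannot contain $b$ or $z$ without being forced to swallow all gray nodes of some copy and hence land in $\FF^r_C$. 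I would run exactly the case analysis of Lemma~\ref{l:cores}: if $b\in S$ then $\de(S)$ picks up the $k-pq$ $rb$-edges, forcing $C\subs S$; if $z\in S$ then $\de(S)$ picks up the $k-pq-1$ $bz$-edges, again forcing $C\subs S$; and $A_i\cup\{y_i\}$ is impossible because $d(A_i\cup\{y_i\})>d(y_i)=2k-2q>k$. Hence $b,z\notin S$, and within each copy $S$ restricts to $\{t_i\}$, $A_i$, $X_i$, or $Y_i$. If in some copy the restriction is $X_i$ or $Y_i$ (so $x_i\in S$ but $x_i\notin S'$ for the ``shared'' gray structure), I would check the cross-copy edge count forces $S$ to equal that single $X_i$ or $Y_i$; otherwise every copy restricts to $\{t_i\}$ or $A_i$, and then $S$ is a union of $A_i$'s (and singletons $\{t_i\}$, but a lone $\{t_i\}$ inside a larger union is absorbed), i.e.\ $S\in\AA$, contradicting that $S$ is not listed.

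The main obstacle I expect is the cross-copy bookkeeping: unlike in the single-copy Lemma~\ref{l:cores}, here $S$ can simultaneously interact with several copies, and I must make sure the edge counts are genuinely additive across copies (they are, because distinct copies share no edges except through $r,b,z$, and we have already excluded $b,z$ from $S$ and put $r$ outside $S$). Concretely I would organize this by writing $d(S)=\sum_{i=1}^p d_i(S)$ where $d_i(S)$ counts edges of copy $i$ in $\de(S)$ plus the $ra_i$ edges, and observing that each $d_i(S)$ is minimized exactly when $S\cap\{a_i,t_i,x_i,y_i\}\in\{\emptyset,\{t_i\},A_i\}$, with any ``$X_i$-type'' or ``$Y_i$-type'' restriction already costing $k-1$ in that copy alone, leaving no room for any other copy to contribute a positive amount — which is precisely why a set of ``$X_i$ or $Y_i$ type'' must be confined to a single copy. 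Once this additivity is set up cleanly, each individual inequality is a one-line calculation identical to the $p=1$ case, and~(\ref{e:q}) supplies all the needed slack.
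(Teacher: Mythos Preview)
Your plan is essentially the paper's: first rule out $b$ and $z$ from $S$ (their presence, together with the $rb$- or $bz$-edges and the $zy_i$/$x_iy_i$ edges, forces $C\subseteq S$), then use that distinct copies share no edges once $r,b,z\notin S$ to write $d(S)=\sum_i d_i(S)$, invoke Lemma~\ref{l:cores} per copy to get each nonempty $S_i\in\{\{t_i\},A_i,X_i,Y_i\}$, and observe that any piece of degree $k-1$ leaves no room for a second nonempty piece. One slip to fix: $d_i(\{t_i\})=k-1$ just like $d_i(X_i)$ and $d_i(Y_i)$, so $\{t_i\}$ is \emph{not} one of the ``cheap'' restrictions---your ``absorbed'' remark is wrong as stated, and the correct argument is the same no-room-for-another-copy reasoning you already give for $X_i$ and $Y_i$ (this is exactly how the paper handles it).
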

\begin{proof}
We prove the first part, that $\CC=\{\{r\},\{t_1\}, \ldots, \{t_p\},C\}$.
Note that:
\begin{itemize}
\item
Each $\{t_i\}$ is a small cut since $d(t_i)=k-1$.
\item
$\{r\}$ is a small cut since $d(r)=k-pq+p(q-1)=k-p$.
\item
$C$ is a small cut since $d(C)=(k-pq-1) +pq=k-1$. 
\end{itemize}
Thus each of the singleton sets $\{r\}, \{t_1\}, \ldots, \{t_p\}$ is a core,
and $C \in \FF$. We prove that $C$ is an inclusion-minimal small cut. 
Suppose to the contrary that there is a small cut $S \subset C$. 
If $z \notin S$ then there is an index $i$ such that 
$S \cap \{x_i,y_i\} \ne \empt$,
but $d(S) \ge d(S \cap \{x_i,y_i\}) \ge k$ by Lemma~\ref{l:cores}. 
If $z \in S$ then $d(S) \ge (k-pq-1)+(k-q) \ge k$ by (\ref{e:q}).
Thus such $S$ does not exists, and $C$ is a core. 

We prove the second part.
We already know that all cuts in 
$\{\{t_1\}, \ldots ,\{t_p\}\} \cup \AA \cup  \{X_1, \ldots, X_p\} \cup \{Y_1, \ldots, Y_p\}$
are in $\FF^r \sem \FF^r_C$; they are all small and none of them contains $C$.
Suppose to the contrary that there is some other small cut $S \in \FF^r \sem \FF^r_C$. 
We now consider two cases and in each case arrive at a contradiction. 

{\bf Case (i): $z \in S$.} 
Note that then $\de(S)$ contains 
$k-pq-1$ $zb$-edges (if $b \notin S$) or  
$k-pq$ $br$ edges (if $b \in S$).
If $y_i \notin S$ for some $i$ then $\de(S)$ also contains $k-q$ $zy_i$-edges 
and we get the contradiction
$d(S) \ge (k-pq-1)+(k-q) \ge k$. 
So assume that $y_i \in S$ for all $i$. 
If $x_i \notin S$ for some $i$ then we get again that 
$d(S) \ge (k-pq-1)+(k-q) \ge k$, so also $x_i \in S$ for all $i$. 
But then $S$ contains $C$, which is a contradiction.

{\bf Case (ii): $z \notin S$.}
Then also $b \notin S$, as otherwise $d(S) \ge d(b)=2k-2pq-1 \ge k$.
Since there are no edges between $\{t_i,a_i,x_i,y_i\}$ and $\{t_j,a_j,x_j,y_j\}$ for $i \ne j$,
$S$ is either a small cut $S=S_i \subs \{t_i, a_i,x_i,y_i\}$ for some $i$, 
or is a union of such cuts.
By Lemma~\ref{l:cores}, the only small cuts contained in $\{t_i, a_i,x_i,y_i\}$ are 
$\{t_i\},A_i,X_i,Y_i$.
All these cuts are in $\FF^r \sem \FF^r_C$, and moreover, any union of the cuts $A_i$ gives a cut in $\AA$.
We claim that no other combinations are possible.
Suppose that $S$ is a union of some cuts,
and one of them $S_i \in \{\{t_i\},A_i,X_i,Y_i\}$ and some other $S_j \in \{\{t_j\},A_j,X_j,Y_j\}$, where $i \ne j$
(if $i=j$ then one of $S_i,S_j$ contains the other).
Then $d(S) \ge d(S_i)+d(S_j)$. 
However, if $S_i \ne A_i$ then $d(S_i) = k-1$, and since $d(S_j) \ge 1$ we get
$d(S) \ge d(S_i)+d(S_j) \ge k$. 
Thus the only unions that give a small cut are the cuts in $\{A_1, \ldots,A_p\}$, namely $S \in \AA$,
contradicting our assumption that $S \notin \AA$.

In both cases we arrived at a contradiction, thus concluding the proof. 
\end{proof}

\begin{lemma} \label{l:feasible'}
Each one of the link sets $L_{\sf red},L_{\sf blue}$ is an inclusion minimal feasible solution. 
\end{lemma}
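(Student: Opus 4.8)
The plan is to follow the template of the proof of Lemma~\ref{l:feasible}, now keeping track of the copy index $i \in \{1,\ldots,p\}$ and invoking the description of $\FF^r$ supplied by Lemma~\ref{l:cores'}. Since $\FF$ is symmetric, a link set covers $\FF$ if and only if it covers $\FF^r$, and by Lemma~\ref{l:cores'},
\[
\FF^r = \{\{t_1\},\ldots,\{t_p\}\} \cup \AA \cup \{X_1,\ldots,X_p\} \cup \{Y_1,\ldots,Y_p\} \cup \FF^r_C .
\]
So I would check covering for each set in this union. The only family not given by an explicit list is $\FF^r_C$, and it is handled uniformly by a single structural remark: every $S \in \FF^r_C$ contains all of $C$ -- in particular it contains $z$ and every $y_i$ -- while it does not contain $r$; hence each of the red link $y_ir$ and the blue link $rz$ has exactly one end in every member of $\FF^r_C$.

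For feasibility of $L_{\sf red}$, I would verify, in analogy with Lemma~\ref{l:feasible}: the link $t_ix_i$ covers $\{t_i\}$ and every cut $\cup_{j\in I}A_j \in \AA$ with $i \in I$ (its end $t_i$ is inside such a cut and $x_i$ outside); the link $a_iy_i$ covers $X_i = \{t_i,a_i,x_i\}$ (end $a_i$ inside, $y_i$ outside); and the link $y_ir$ covers $Y_i$ and, by the remark above, all of $\FF^r_C$. For feasibility of $L_{\sf blue}$: the link $t_ib$ covers $\{t_i\}$, every $\AA$-cut using index $i$, $X_i$ and $Y_i$ (its end $t_i$ is inside, $b$ outside); and $rz$ covers $\FF^r_C$. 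In both cases every set of the displayed union is covered, so by Lemma~\ref{l:cores'} the set covers $\FF^r$, hence $\FF$.

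For inclusion-minimality I would exhibit, for each link, a cut in $\FF^r$ that this link covers but no other link of the same colour does. For $L_{\sf red}$: $\{t_i\}$ is covered only by $t_ix_i$, since no other red link has an end equal to $t_i$; $X_i$ is covered only by $a_iy_i$, since $t_ix_i$ has both ends in $X_i$, $y_ir$ has neither, and every link of a copy $j \ne i$ has no end in $X_i$; and $Y_i$ is covered only by $y_ir$, since both $t_ix_i$ and $a_iy_i$ have both ends in $Y_i$. For $L_{\sf blue}$: $\{t_i\}$ is covered only by $t_ib$; and the core $C$ (which lies in $\FF^r_C \subs \FF^r$) is covered only by $rz$, because no link $t_jb$ has an end in $C$ -- alternatively one may use the cut $V \sem \{r\}$, which is small since $d(r) = k-p < k$. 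Hence no link of either set can be removed.

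The whole argument is bookkeeping: once the small cuts are organized by Lemma~\ref{l:cores'}, every case is a one-line check. The only place asking for a little care is $\FF^r_C$, since it is not enumerated; the resolution is the structural observation that $z,y_1,\ldots,y_p \in C$ and $r$ lies in no member of $\FF^r_C$, so the links $y_ir$ and $rz$ cross every such cut. I do not foresee any real difficulty beyond keeping the indices of the $p$ glued copies straight.
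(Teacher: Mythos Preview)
Your proposal is correct and follows essentially the same approach as the paper's own proof: you use Lemma~\ref{l:cores'} to list the cuts of $\FF^r$, verify that each link of $L_{\sf red}$ and $L_{\sf blue}$ covers the appropriate pieces of that list (including $\FF^r_C$ via the structural remark on $y_ir$ and $rz$), and exhibit for each link a witness cut in $\FF^r$ covered by no other link of the same colour. The only cosmetic difference is that the paper uses $V\setminus\{r\}$ as the witness for $rz$, whereas you offer both $C$ and $V\setminus\{r\}$; either works.
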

\begin{proof}
We will verify that $L_{\sf red}$ covers $\FF^r$ and
that for every red link there is a set in $\FF^r$ that is not covered by any other red link.
\begin{itemize}
\item
$t_ix_i$ covers $\{t_i\}$ and every set $A \in \AA$ that contains $t_i$, 
and it is the only red link that covers $\{t_i\}$.  
Thus the link set $\{t_1x_1, \ldots, t_px_p\}$ covers $\{\{t_1\}, \ldots, \{t_p\}\} \cup \AA$. 
\item
$a_iy_i$ covers $X_i$, and it is the only red link that covers $X_i$.
Thus the link set $\{a_1y_1, \ldots, a_py_p\}$ covers $\{X_1, \ldots,X_p\}$. 
\item
$y_ir$ covers $Y_i$ and $\FF^r_C$, and it is the only red link that covers $Y_i$. 
Consequently, the link set $\{y_1r, \ldots, y_pr\}$ covers $\{Y_1, \ldots,Y_p\} \cup \FF^r_C$. 
\end{itemize}

We will verify $L_{\sf blue}$ covers $\FF^r$ and
that for every blue link there is a set in $\FF^r$ that is not covered by any other blue link.
\begin{itemize}
\item
$t_ib$ covers each of $\{t_i\},X_i,Y_i$ and any set $A \in \AA$ that contains $t_i$, 
and it is the only blue link that covers $\{t_i\}$.  
Consequently, the link set $\{t_1b, \ldots t_pb\}$ covers the set family
\[
\FF^r \sem \FF^r_C=\{\{t_1\}, \ldots ,\{t_p\}\} \cup \AA \cup  \{X_1, \ldots, X_p\} \cup \{Y_1, \ldots, Y_p\} \ .
\]
\item
$rz$ covers $\FF^r_C$ and it is the only blue link that covers $V\sem \{r\}$;
note that $V\sem \{r\}$ is a small cut since $d(r)= k-pq+p(q-1)=k-p<k$. 
\end{itemize}

Consequently, 
each one of the link sets $L_{\sf red},L_{\sf blue}$ covers $\FF^r$ (by Lemma~\ref{l:cores'}),
and no subset of $L_{\sf red}$ or of $L_{\sf blue}$ covers $\FF^r$. 
\end{proof}

If we run the primal-dual algorithm on the constructed instance, 
then in Phase~1 it will raise the dual variables 
of the $p+2$ cores in $\CC=\{\{r\},\{t_1\}, \ldots, \{t_p\},C\}$ to $1$ and will add all links,
but all blue links will be removed during Phase~2 (the reverse delete phase). 
The algorithm will thus return the red solution $L_{\sf red}$ of cost $5p$ 
(with the dual solution $y_S=1$ for every $S \in \CC$).
The blue solution $L_{\sf blue}$ has cost $p+2$. 
The ratio in this case is $5p/(p+2)$.
Moreover, if we assign cost $2+\epsilon$ to the blue link $rz$ and cost $1+\epsilon$ to every other blue link, 
no blue link will be even selected at Phase~1. 

The construction works for any integers $q,p,k$ such that 
$q \ge 1$, $p \ge 2$, and $k \ge 2pq+1$.
The graph of the constructed instance has $4p+3$ nodes. 
Thus Theorem~\ref{t:main} is proved.

Note that the highest possible value of $p$ is 
when $q=1$ and $p=\lfloor (k-1)/2 \rfloor$,
giving a gap of 
$5 \cdot \f{k-1}{k+3}=5(1-\f{4}{k+3})$ for $k$ odd and 
$5 \cdot \f{k-2}{k+2}=5(1-\f{4}{k+2})$ for $k$ even.
 
We note that our example does not prove an integrality gap of $5$, 
and does not exclude that some other algorithm based on the standard LP can 
achieve approximation ratio better than $5$. 
However, the iterative rounding method was already excluded in \cite{SBC}.


\end{document}